\def\be{\begin{equation}}
\def\ee{\end{equation}}
\def\ba{\begin{array}}
\def\ea{\end{array}}
\def\qed{\leavevmode\unskip\penalty9999 \hbox{}\nobreak\hfill
     \quad\hbox{\leavevmode  \hbox to.77778em{%
               \hfil\vrule   \vbox to.675em%
               {\hrule width.6em\vfil\hrule}\vrule\hfil}}
     \par\vskip3pt}
\newtheorem{theorem}{Theorem}
\newtheorem{lemma}{Lemma}
\begin{document}
\title{Uniform Quantification of Correlations for Bipartite Systems}
\author{Tinggui Zhang$^{1}$, Hong Yang$^{2}$, Xianqing Li-Jost$^{1,3}$, Shao-Ming Fei$^{3,4}$}

\affiliation{ $^1$School of Mathematics and Statistics, Hainan
Normal University, Haikou, 571158, China\\
 $^{2}$ College of Physics
and Electronic Engineering, Hainan Normal University, Haikou 571158,
China\\
$^3$Max-Planck-Institute for Mathematics in the Sciences,
 Leipzig 04103, Germany\\
$^4$School of Mathematical Sciences, Capital Normal University,
Beijing 100048, P. R. China}

\begin{abstract}
Based on the relative entropy, we give a unified characterization of quantum correlations
for nonlocality, steerability, discord and entanglement for any bipartite quantum states. For
two-qubit states we show that the quantities obtained from quantifying
nonlocality, steerability, entanglement and discord
have strictly monotonic relationship. As for examples, the Bell
diagonal states are studied in detail.
\end{abstract}

\pacs{03.67.-a, 02.20.Hj, 03.65.-w} \maketitle

\section{Introduction}

As a key feature of quantum mechanics, quantum correlation has many different
forms such as entanglement, discord, steering, nonlocality
etc.. Nonlocality was first pointed out in 1935 by Einstein,
Podolsky and Rosen (EPR) \cite{aebp}, indicating that
the nonlocality must be an artefact of the
incompleteness of quantum mechanics. An interesting response to the EPR paradox was given by
Schr\"odinger \cite{esch}, who introduced another kind of correlation
in entangled states - quantum steering: Alice's ability to affect
Bob's state through her choice of measure basis. Later, it was shown that
the measurement outcomes from local measurements on entangled states
may be nonlocal, in the sense that they violate a Bell
inequality \cite{jsbe}. Let us briefly recall their definitions.

Quantum entanglement: Let $\rho_{AB}$ be a bipartite state with subsystems A and
B. $\rho_{AB}$ is said to be separable if it can be expressed as
$$
\rho_{AB}=\sum_{i=1}^n p_i \rho_{i}^A\otimes\rho_i^{B},
$$
where $0\le p_i\le 1$, $\sum\nolimits_i p_i=1$, $\rho_i^A$ and $\rho_i^B$
are states of the subsystems $A$ and $B$, respectively.
Otherwise it is called entangled. Basic aspects of entanglement including the characterization,
detection, distillation and quantification have been reviewed in \cite{rpmk}.

Bell nonlocality: Let $\mathcal{M}_a$ ($\mathcal{M}_b$) denote the set of observables
that Alice (Bob) performs measurement on the systems A (B).
Let $\lambda(A)$ ($\lambda(B)$) stand for the eigenvalues $\{a\}$ ($\{b\}$) of $A\in\mathcal{M}_a$
($B\in\mathcal{M}_b$),
and $P(a|A;W)$ ($P(b|B;W)$) the probability that Alice (Bob) gets the measurement outcome
$a$ ($b$) when she (he) measures the subsystem A (B) of the state $W$.
We say that a state $W$ is Bell local, if the following relation is satisfied for all $a \in
\lambda(a)$, $b \in \lambda(B)$, $A \in \mathcal{M}_a$ and $B \in
\mathcal{M}_b$,
\begin{eqnarray*}&&Tr[(\Pi_{a}^{A}\otimes \Pi_{b}^B)W]\\\nonumber &&\equiv P(a,b|A,B;W)=\sum_{\xi}p(a|A,\xi)p(b|B,\xi)p_{\xi},\end{eqnarray*}
where $\Pi_{a}^{A}$ ($\Pi_{b}^{B}$) is the projector satisfying
$A\,\Pi_{a}^{A}=a\Pi_{a}^A$ ($B\,\Pi_{b}^{B}=b\Pi_{b}^B$), $p(a|A,\xi)$ and $p(b|B,\xi)$
are some probability distributions involving the
local hidden variable (LHV) $\xi$. Recently, it has been
realized that one can significantly expand the notion of quantum
nonlocality by considering more complex causal structures going beyond
the usual LHV models \cite{chra,rccm}.

Quantum steering: If Alice performs the measurement $x$
and obtains the outcome $a$, then Bobs subnormalized reduced state
is given by $\rho_{a|x}=tr_A[(A_{a|x}\otimes I)\rho]$. And
$\sum_{a}\rho_{a|x}=\rho_B$ is independent of the measurement chosen
by Alice. The whole collection of ensembles $\{\rho_{a|x}\}_{a,x}$ is
a state assemblage. If there exists a local hidden state (LHS) model such that
\begin{eqnarray*}\label{defin0}\rho_{a|x}=\sum_{\lambda}p(\lambda)p(a|x,\lambda)\sigma_{\lambda},
\end{eqnarray*}
then Alice can not steer Bob's system. Here, $p(a|x,\lambda)$ are
some conditional probability distributions, and $\sigma_{\lambda}$
are a collection of subnormalized states that sum up to $\rho_B$
and satisfy $\sum_{\lambda}tr[p(\lambda)\sigma_{\lambda}]=1$
\cite{hmsj,mfpu,hzmh}. The steering problem is closely related to
the joint-measurement problem \cite{qmtv,urbc}.

Quantum discord: Two classically identical expressions
for the mutual information generally differ when the systems
involved are quantum. This difference was defined as the quantum
discord. It can be used as a measure of the quantumness of
correlations \cite{howh}. Later Ref. \cite{kmtp} gives the
mathematical definition of quantum discord of a state $\rho$,
$$
\mathcal{D}(\rho)=\min\limits_{\sigma\in \mathcal CC}S(\rho\parallel \sigma),
$$
where $\mathcal CC$ stands for the set of classically correlated
states of the form $\sum_{ij}p_{ij}|i\rangle\langle
i|\otimes|j\rangle\langle j|$, where $p_{ij}$ is a joint probability
distribution and $|i\rangle$ span the local orthonormal basis.

Although our understanding on different kinds of correlations has
advanced greatly recently, many fundamental questions remain open, e.g.,
(1) How to quantify these correlations? (2) What is the relationship
between them? To the first question there are already a lot of
literatures, see Ref.\cite{rpmk,oggt} for entanglement of quantum states,
Ref.\cite{lhvv,howh,kmtp} for quantifying quantum discord,
Ref. \cite{dcol,msee,jdcb,slsf} for the measure of nonlocality,
and Ref. \cite{psmn,ikar,Acrm} for quantifying steering.
There are many measures for different correlations. And the relationship
between them is rather complicated. It is even difficult to compare the measures for
a given correlation. Thus, one would ask if there is a unified
quantification for all the above correlations.

In this work, we give a unified
quantification for all these quantum correlations.
We first give the definitions of relative entropy steering and relative
entropy Non-locality. Then we study the relationship among them
for two-qubit states. At last, we discuss multipartite situations.

\section{Unified Quantification of Quantum Correlations}
The von Neumann relative entropy is defined as \cite{glin}
$S(\rho\|\sigma)\equiv tr\{\rho (\ln\rho - \ln\sigma)\}$
(in our text, we take $\log_2$ instead of $\ln$). In fact, Vedral et al.
\cite{vvmb,vvbm} first introduced the relative entropy of
entanglement, while the relative entropy of discord was first
proposed by Modi et al. \cite{kmtp}.
Let us list the above definitions \cite{yyxx}:
$$\mathcal{E}(\rho)=\min\limits_{\sigma\in
\mathcal{S}}S(\rho\parallel \sigma),$$
$$\mathcal{D}(\rho)=\min\limits_{\sigma\in \mathcal{CC}}S(\rho\parallel
\sigma),$$
where $\mathcal{E}(\rho)$ and $\mathcal{D}(\rho)$ are quantum entanglement
and quantum discord of state $\rho$, $\mathcal{S}$ and $\mathcal{CC}$ stand
for the sets of separable states and classically correlated states, respectively.

In the following, we choice von Neumann relative entropy to measure the quantum
steerability and nonlocality of quantum states. We first give a lemma.

\begin{lemma}The sets of all unsteerable states and LHV states are convex
sets, respectively. \end{lemma}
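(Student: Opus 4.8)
The plan is to handle both cases by the same mechanism: in each definition the membership condition is the existence of a hidden-variable decomposition of an object that depends \emph{linearly} on the state, so I would exploit that linearity together with a disjoint-union reweighting of the two hidden-variable spaces.

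First consider unsteerability. Let $\rho^{(1)}$ and $\rho^{(2)}$ be unsteerable, and set $\rho = t\rho^{(1)} + (1-t)\rho^{(2)}$ with $t \in [0,1]$. Since the assemblage map $\rho \mapsto \rho_{a|x} = \mathrm{tr}_A[(A_{a|x}\otimes I)\rho]$ is linear, the assemblage of $\rho$ is $\rho_{a|x} = t\,\rho_{a|x}^{(1)} + (1-t)\,\rho_{a|x}^{(2)}$. By hypothesis each $\rho^{(k)}$ admits an LHS model $\rho_{a|x}^{(k)} = \sum_{\lambda} p_k(\lambda) p_k(a|x,\lambda)\sigma_\lambda^{(k)}$. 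I would then build a single LHS model for $\rho$ on the disjoint union of the two hidden-variable ranges, labelling the new variable $\nu$, assigning weight $t\,p_1(\lambda)$ to the copies coming from the first model and $(1-t)\,p_2(\mu)$ to those coming from the second, and carrying over the corresponding conditional probabilities and subnormalized states unchanged. Substituting these definitions reproduces exactly $t\,\rho_{a|x}^{(1)} + (1-t)\,\rho_{a|x}^{(2)} = \rho_{a|x}$, so $\rho$ is unsteerable.

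The Bell-local case is identical in structure. The joint outcome probability $P(a,b|A,B;W) = \mathrm{Tr}[(\Pi_a^A \otimes \Pi_b^B)W]$ is linear in $W$, so a convex combination $W = tW_1 + (1-t)W_2$ of Bell-local states again splits additively. Each factor carries an LHV form $\sum_\xi p(a|A,\xi)p(b|B,\xi)p_\xi$, and I would merge the two into one LHV model over the disjoint union of the variables $\xi$, reweighting by $t$ and $1-t$ exactly as above; the result is a valid LHV decomposition for $W$.

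The only points requiring care --- and hence where the bulk of the bookkeeping lies --- are the normalization and positivity checks on the merged model: the new weights sum to $t + (1-t) = 1$, the conditional distributions stay normalized because they are inherited verbatim, and in the steering case the subnormalized states still sum to the correct reduced state since $\rho_B = t\rho_B^{(1)} + (1-t)\rho_B^{(2)}$. I expect no genuine analytic obstacle: the statement amounts to the observation that both ``admits an LHS model'' and ``admits an LHV model'' are preserved under mixing, and the proof reduces to assembling the combined model and verifying these constraints.
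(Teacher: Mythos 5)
Your proposal is correct, and for the unsteerable half it is essentially identical to the paper's proof: the paper also uses linearity of $\rho \mapsto \mathrm{tr}_A[(A_{a|x}\otimes I)\rho]$ and builds the merged LHS model on the disjoint union of the two hidden-variable ranges, with weights $s\,p(\lambda)$ and $(1-s)\,q(\mu)$, conditional probabilities and subnormalized states carried over unchanged, followed by the same normalization check $\sum_{\nu}\mathrm{tr}[p(\nu)\sigma_{\nu}]=s+(1-s)=1$. For the Bell-local half, however, you take a genuinely different route. The paper does not merge LHV models; instead it argues via Bell inequalities: if $\mathrm{tr}(\mathcal{B}\rho_1)\leq c$ and $\mathrm{tr}(\mathcal{B}\rho_2)\leq c$ for every Bell operator $\mathcal{B}$, then by linearity $\mathrm{tr}[\mathcal{B}(s\rho_1+(1-s)\rho_2)]\leq c$, so the mixture violates no Bell inequality. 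That argument is shorter, but it silently invokes the converse direction of the duality between LHV models and Bell inequalities --- namely that a state satisfying \emph{all} Bell inequalities admits an LHV model --- which holds because the set of local correlations in each measurement scenario is a polytope cut out by its facet inequalities, but is not itself proved in the paper. Your direct construction (disjoint union of the hidden-variable spaces $\xi$, reweighted by $t$ and $1-t$, with response functions $p(a|A,\xi)$, $p(b|B,\xi)$ inherited verbatim so that a single setting-independent distribution serves all measurement choices) avoids this duality entirely and is therefore more self-contained and elementary; it also makes the two halves of the lemma structurally parallel, which the paper's version does not.
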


\begin{proof}
Any state admitting LHV models does not violate any Bell inequality.
Let $\rho_1$ and $\rho_2$ be two such LHV states. They satisfy all Bell inequalities like
$tr({\cal B}\rho_1) \leq c$ and $tr({\cal B}\rho_2) \leq c$, where ${\cal B}$ is any Bell
operators and $c$ some constant. Then one has $tr [{\cal B}(s \rho_1+(1-s) \rho_2)]=s [tr
({\cal B}\rho_1)]+(1-s) [tr({\cal B}\rho_2)] \leq  c$, where $s \in [0,1]$. This
proves that the LHV states constitute a convex set.

Suppose that states $\rho$ and $\widetilde{\rho}$ are
unsteerable. From (\ref{defin0}), there exists LHS model such that
$$
\rho_{a|x}=tr_A[(A_{a|x}\otimes I)\rho]=\sum_{\lambda}p(\lambda)p(a|x,\lambda)\sigma_{\lambda},
$$
where $p(a|x,\lambda) \geq 0$, and $\sigma_{\lambda}$ are a
collection of subnormalized states that sum up to $\rho_B$ which
satisfies $\sum_{\lambda}tr[p(\lambda)\sigma_{\lambda}]=1$ .
Analogously for $\widetilde{\rho}$,
$$
\widetilde{\rho}_{a|x}=tr_A[(A_{a|x}\otimes
I)\widetilde{\rho}]=\sum_{\mu}q(\mu)q(a|x,\mu)\sigma_{\mu}.
$$
Then
\begin{eqnarray*} \ \  &&[s\rho+(1-s)\widetilde{\rho}]_{a|x}=tr_A[(A_{a|x}\otimes
I)[s\rho+(1-s)\widetilde{\rho}]]\\[2mm]\nonumber &&=s[
tr_A[(A_{a|x}\otimes I)\rho]]+(1-s)tr_A[(A_{a|x}\otimes
I)\widetilde{\rho}]\\[1mm]\nonumber &&=s
\sum_{\lambda}p(\lambda)p(a|x,\lambda)\sigma_{\lambda}+ (1-s)
\sum_{\mu}q(\mu)q(a|x,\mu)\sigma_{\mu}\\\nonumber
&&=\sum_{\nu=1}^{m+n}p(\nu)p(a|x,\nu)\sigma_{\nu},
\end{eqnarray*}
where, $p(\nu)=s\,p(\lambda)$, $\sigma_{\nu}=\sigma_{\lambda}$ and $\nu=\lambda$
for $\nu=1,2,\cdots,m$; $p(\nu)=(1-s)\,p(\mu)$,
$\sigma_{\nu}=\sigma_{\mu}$ and $\nu=\mu$ for $\nu=m+1,m+2,\cdots,m+n$.
Then $p(a|x,\nu) \geq 0$, and
$\sigma_{\nu}$ are a collection of subnormalized states that sum up
to $s\rho_B+(1-s)\widetilde{\rho}_B$ which satisfies
$\sum_{\nu}tr[p(\nu)\sigma_{\nu}]
=s\sum_{\lambda}tr[p(\lambda)\sigma_{\lambda}]+(1-s)\sum_{\mu}tr[p(\mu)\sigma_{\mu}]=s+(1-s)=1$.
\end{proof}

From Lemma 1 we can define the following measure of steerability $\mathcal{S}(\rho)$ and
the measure of nonlocality $\mathcal{N}(\rho)$ for a quantum state $\rho$,
\begin{eqnarray}\label{defin1}
\mathcal{S}(\rho)=\min\limits_{\sigma\in \mathcal{U}}S(\rho\parallel
\sigma),
\end{eqnarray}
and
\begin{eqnarray}\label{defin2}
\mathcal{N}(\rho)=\min\limits_{\sigma\in \mathcal L}S(\rho\parallel
\sigma),
\end{eqnarray}
where $\mathcal{U}$ and $\mathcal{L}$ stand
for the sets of unsteerable states and the LHV states, respectively.
$S(\rho\parallel\sigma)=tr(\rho \log_2\rho-\rho \log_2\sigma)$.

The measure $\mathcal{S}(\rho)$ satisfy the following conditions ( analogously for $\mathcal{N}(\rho)$:
(1) $\mathcal{S}(\rho)\geq 0$, $\mathcal{S}(\rho)=0$ iff $\rho = \sigma $. (2) Local unitary
operations leave $\mathcal{S}$ invariant. (3) For any completely
positive trace preserving map $\Theta$,
$\mathcal{S}(\Theta\rho) \leq
\mathcal{S}(\rho)$ \cite{glin,vvmb,vvbm}. (4) $\mathcal{S}$ is convex, which can be proved in the following way. Let $\rho_1, \rho_2, \sigma_1, \sigma_2$ be four arbitrary states. From the convexity of the quantum relative entropy in both arguments \cite{mjdo}, we have
\begin{eqnarray*}
&&S(x\rho_1+(1-x)\rho_2||x\sigma_1+(1-x)\sigma_2) \\\nonumber &&\leq x S(\rho_1||\sigma_1)+(1-x)S(\rho_2||\sigma_2),
\end{eqnarray*}
where $x\in [0,1]$. By definition (\ref{defin1}) and Lemma 1, we have
\begin{eqnarray*}&&\mathcal{S}(x\rho_1+(1-x)\rho_2) \\\nonumber &&\leq S((x\rho_1+(1-x)\rho_2)||(x\sigma_1^*+(1-x)\sigma_2^*))\\\nonumber &&
\leq x S(\rho_1||\sigma_1^*)+(1-x) S(\rho_2 ||
\sigma_2^*)\\\nonumber &&=x S
(\rho_1)+(1-x)S(\rho_2),\end{eqnarray*} where $\sigma_i^*$ minimizes
$S(\rho_i||\sigma_i^*)$ over $\sigma\in \mathcal{U}$.

Since the steerability is weaker than Bell nonlocality and
stronger than nonseparability \cite{hmsj,Acrm}, one has
$\mathcal{CC} \subset \mathcal{S} \subset
\mathcal{U} \subset \mathcal{L}$. Therefore, we have the following relation:

\begin{theorem}\label{quancorr1}For bipartite quantum states,
the following relations hold:
\begin{equation}\label{tm}
\mathcal{D}(\rho) \geq \mathcal{E}(\rho) \geq \mathcal{S}(\rho) \geq \mathcal{N}(\rho).
\end{equation}
\end{theorem}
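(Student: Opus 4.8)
The plan is to reduce the whole statement to the chain of set inclusions displayed immediately before the theorem,
\begin{equation*}
\mathcal{CC} \subset \mathcal{S} \subset \mathcal{U} \subset \mathcal{L},
\end{equation*}
combined with a single elementary fact about constrained minimization: if $A \subseteq B$, then for the fixed function $\sigma \mapsto S(\rho\|\sigma)$ one has $\min_{\sigma \in B} S(\rho\|\sigma) \leq \min_{\sigma \in A} S(\rho\|\sigma)$, since every minimizer attained over the smaller set $A$ is still an admissible candidate over the larger set $B$ and can therefore only be undercut. Each of the four quantities $\mathcal{D}(\rho), \mathcal{E}(\rho), \mathcal{S}(\rho), \mathcal{N}(\rho)$ is by definition exactly such a relative-entropy minimum over one of the four nested sets, so the theorem becomes a direct transcription of the inclusions.

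Concretely, I would apply the minimization principle to each consecutive pair in the chain. The inclusion $\mathcal{CC} \subset \mathcal{S}$ gives $\mathcal{E}(\rho) \leq \mathcal{D}(\rho)$; the inclusion $\mathcal{S} \subset \mathcal{U}$ gives $\mathcal{S}(\rho) \leq \mathcal{E}(\rho)$; and the inclusion $\mathcal{U} \subset \mathcal{L}$ gives $\mathcal{N}(\rho) \leq \mathcal{S}(\rho)$. Reading these three inequalities in order produces the claimed hierarchy
\begin{equation*}
\mathcal{D}(\rho) \geq \mathcal{E}(\rho) \geq \mathcal{S}(\rho) \geq \mathcal{N}(\rho),
\end{equation*}
which is exactly (\ref{tm}).

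The step carrying the actual content is the justification of the inclusions, so that is where I would focus care. The inclusion $\mathcal{CC} \subset \mathcal{S}$ is immediate, since a classically correlated state $\sum_{ij} p_{ij} |i\rangle\langle i| \otimes |j\rangle\langle j|$ is already written in separable product form. The inclusions $\mathcal{S} \subset \mathcal{U} \subset \mathcal{L}$ encode the established physical hierarchy that nonseparability is necessary for steerability, which in turn is necessary for Bell nonlocality: every separable state admits an LHS model and every unsteerable state admits an LHV model, as follows by unpacking the respective hidden-state and hidden-variable definitions and is recorded in the cited references. Finally, I would invoke Lemma 1 to ensure that the two new sets $\mathcal{U}$ and $\mathcal{L}$ are convex (the sets $\mathcal{CC}$ and $\mathcal{S}$ being standard convex sets), which guarantees that the minima defining $\mathcal{S}(\rho)$ and $\mathcal{N}(\rho)$ are well posed and attained, so that the pairwise comparisons above are genuine inequalities between achieved minima rather than merely between infima.
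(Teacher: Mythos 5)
Your proposal is correct and takes essentially the same route as the paper: the paper offers no separate proof, stating the theorem as an immediate consequence of the inclusion chain $\mathcal{CC} \subset \mathcal{S} \subset \mathcal{U} \subset \mathcal{L}$ together with the monotonicity of $\min_{\sigma} S(\rho\|\sigma)$ under enlarging the feasible set, which is exactly your argument. One minor caveat: convexity (Lemma 1) does not by itself guarantee that the minima are \emph{attained} --- that requires closedness/compactness of the sets in finite dimension --- but this is harmless, since the monotonicity comparison holds equally well at the level of infima.
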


As an example, let us consider two-qubit Bell-diagonal
states. Firstly, for two-qubit pure state
$|\phi\rangle=\alpha|00\rangle+\beta|11\rangle$, the
quantum discord \cite{sluo} and quantum entanglement \cite{vvmb,vvbm} are given by
$-|\alpha|^2\log_2|\alpha|^2-|\beta|^2\log_2|\beta|^2$.
It has been proven that every entangled pure state is steerable \cite{psmn}, and
all entangled pure states violate a
single Bell's inequality \cite{syqc}. Therefore, a separable pure
state is unsteerable and LHV. Now consider two-qubit Bell-diagonal states,
\begin{eqnarray}\label{defin3}
\rho_{AB}=\frac{1}{4}(I+\sum_{j=1}^3c_j\sigma_j^A\otimes
\sigma_j^{B})=\sum_{a,b=0}^1\lambda_{ab}|\beta_{ab}\rangle\langle
\beta_{ab}|,
\end{eqnarray}
where the $\sigma_j$'s are Pauli
operators \cite{mdcm}. The eigenstates are the four Bell states
$|\beta_{ab}\rangle \equiv (|0,b\rangle + (-1)^a|1,1\oplus
b\rangle)/\sqrt{2}$, with eigenvalues
$$\lambda_{ab}=\frac{1}{4}[1+(-1)^ac_1-(-1)^{a+b}c_2+(-1)^bc_3].$$

The quantum discord is given by $\mathcal{D}(\rho)=2-S(\rho_{AB})-C$, where
$C=\frac{1+c}{2}\log_2(1+c)+\frac{1-c}{2}\log_2(1-c)$ with $c= \max
|c_j|$ \cite{sluo,mdcm}.
For a Bell-diagonal state, when all $\lambda_{ab} \in
[0,\frac{1}{2}]$, $\mathcal{E}(\rho_{AB}) = 0$. When
$\lambda_{(00)} \geq \frac{1}{2}$ (analogously for other $\lambda_i \geq \frac{1}{2}$),
$\mathcal{E}(\rho_{AB})=\lambda_{(00)}\log_2(\lambda_{(00)})+(1-\lambda_{(00)})\log_2(1-\lambda_{(00)})+1$
\cite{vvmb}. Hence
\begin{eqnarray*}\mathcal{D}(\rho_{AB})-\mathcal{E}(\rho_{AB})&&=2-S(\rho_{AB})-C-E(\rho_{AB})\geq
0\end{eqnarray*}(See Fig. 1).
\begin{figure}[h]
\resizebox{6cm}{!}{\includegraphics{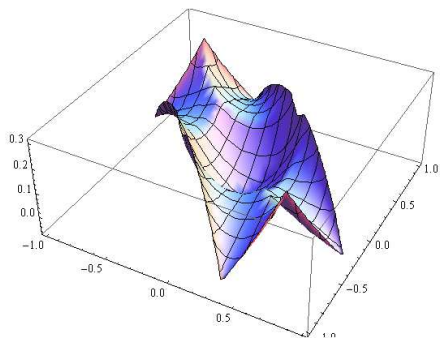}} \caption{The
quantity $\mathcal{D}(\rho)-\mathcal{E}(\rho)$ for $c_1,c_2\in
[-1,1]$, $c_3=0.4$.}
\end{figure}

In Ref. \cite{qqhz} it has been proved that any Bell-diagonal state with
correlation matrix $T$ is steerable by two projective measurements
iff $\lambda_1+\lambda_2 > 1$, where $\lambda_1$ and $\lambda_{2}$ are
the two largest eigenvalues of $TT^{t}$, where $t$ denotes transpose. For Bell diagonal state
(\ref{defin3}), the correlation matrix
$T=diag(c_1,c_2,c_3)$. Therefore, $\lambda_1+\lambda_2$ is the
summation of the two largest $c_i^2$. Without loss of generality, we
assume $c_1=c_2=\frac{\sqrt{2}}{2} > |c_3|$. In this case,
the Bell-diagonal state is unsteerable, that is, $\mathcal{S}(\rho)=0$.
But since $\lambda_{01} > \frac{1}{2}$, the entanglement is great than zero,
$\mathcal{E}(\rho)=\lambda_{01}\log_2\lambda_{01}-(1-\lambda_{01})\log_2(1-\lambda_{01})
> 0$. For example, for $|c_3|<0.4$, see Fig. 3.

Ref. \cite{qqhz} showed that a Bell diagonal state is steerable by
two projective measurements iff it violates the CHSH inequality.
This means that the steerability coincides with NLHV for Bell
diagnol states.

\begin{figure}[h]
\resizebox{6cm}{!}{\includegraphics{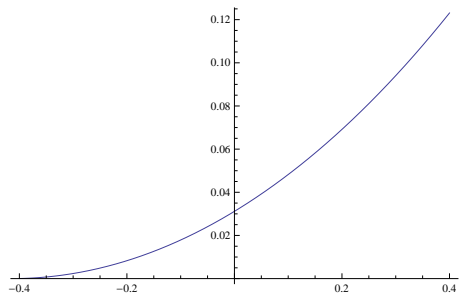}} \caption{$\mathcal{E}(\rho)$ for $|c_3| < 0.4$.}
\end{figure}

\section{conclusions and remarks}

We have studied some important correlations in bipartite systems.
Based on the relative entropy, we presented a unified characterization of discord, entanglement,
nonlocality and steerability. For two-qubit states we have showed that
the quantities obtained from quantifying nonlocality, steerability,
entanglement and discord have strictly monotonic
relationship. Detailed investigations have been given to Bell diagonal states.
The results can be generalized to the case of multipartite quantum systems.
In fact, from geometric point of view, any distance measure can chosen as a
candidate for quantifying quantum correlations.
One can choose any possible distance measure instead of $S(\rho\parallel
\sigma) $(relative entropy isn't a distance) in Eq.(\ref{defin1}) and
(\ref{defin2}) to quantify the Bell nonlocality and steerability. For
example, we can choose the Bures metric
$D(\rho||\sigma)=2-2\sqrt{F(\rho,\sigma)}$, where
$F(\rho||\sigma)\equiv
[tr\{\sqrt{\sigma}\rho\sqrt{\sigma}\}^{\frac{1}{2}}]^2$ is the
so-called fidelity \cite{rjos}.

Moreover, our approach also coincides with other methods in quantifying physical quantities besides
quantum correlations, like coherence \cite{tbmc},
$$
\mathcal{C}(\rho)=\min\limits_{\sigma\in
\mathcal{I}}S(\rho\parallel \sigma)=S(\rho_{diag})-S(\rho),
$$
where $\mathcal{I}$ and $\rho_{diag}$ stand
for incoherent states and the diagonal version of $\rho$.
Instead of (\ref{tm}), one has
$\mathcal{C}(\rho) \geq\mathcal{D}(\rho) \geq \mathcal{E}(\rho) \geq \mathcal{S}(\rho) \geq \mathcal{N}(\rho)$. For two-qubit Bell-diagonal states (\ref{defin3}),
the diagonal version of $\rho_{AB}$ is
$\rho_{(diag)}=\frac{1}{4}[I+c_3\sigma_3\otimes \sigma_3]$, with
eigenvalues $\Lambda_{(1,2)}=\frac{1}{4}[1+c_3]$ and
$\Lambda_{(3,4)}=\frac{1}{4}[1-c_3]$. Therefore, quantum coherence
$\mathcal{C}(\rho)=S(\rho_{(diag)})-S(\rho_{AB})=\sum_{a,b}\lambda_{ab}
\log_2(\lambda_{ab})-\sum_{i=1}^4\Lambda_i\log_2(\Lambda_i)$.
Therefore
\begin{eqnarray*}&&\mathcal{C}(\rho_{AB})-\mathcal{D}(\rho_{AB})\\\nonumber&&
=\frac{1+c}{2}\log_2(1+c)+\frac{1-c}{2}\log_2(1-c)\\\nonumber
&&-\frac{1+c_3}{2}\log_2(\frac{1+c_3}{4})
 -\frac{1-c_3}{2}\log_2(\frac{1-c_3}{4})-2\\\nonumber &&=
\frac{1+c}{2}\log_2(1+c)+\frac{1-c}{2}\log_2(1-c)\\\nonumber&&-\frac{1+c_3}{2}\log_2(1+c_3)-\frac{1-c_3}{2}\log_2(1-c_3)
\geq 0. \end{eqnarray*}
see Fig. 3.

\begin{figure}[h]
\resizebox{6cm}{!}{\includegraphics{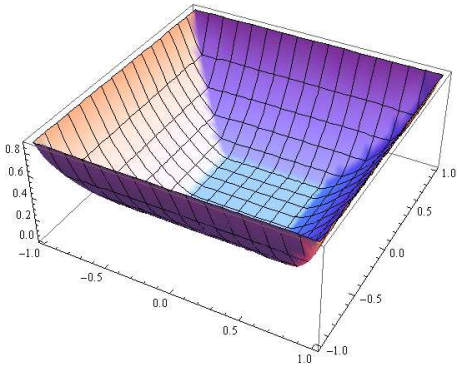}} \caption{
$\mathcal{C}(\rho_{AB})-\mathcal{D}(\rho_{AB})$ for $c_1,c_2\in [-1,1]$, $c_3=0.5$.}
\end{figure}

Due to the lack of the general expressions for unsteerable states and states admitting LHV models, it is difficult to calculate $\mathcal{S}(\rho)$ and $\mathcal{N}(\rho)$ for a given bipartite state $\rho$. It is neither an easy task to compute general $\mathcal{E}(\rho)$ (resp. $\mathcal{D}(\rho)$), although in these cases the general expressions of separable states (resp. zero-discord states) are explicitly known. We leave these problems for further investigations.

\bigskip
Acknowledgments:  This work is supported by the NSF of China under
Grant No. 11501153, 11675113; the NSF of Hainan Province
under Grant No.20161006.

\end{document}